\pgfplotsset{compat = newest}
\title{\LARGE \bf
Gaussian Process based Passivation of a Class of \\Nonlinear Systems with Unknown Dynamics
}
\author{Thomas Beckers and Sandra Hirche
\thanks{T. Beckers and S. Hirche are with the Chair of Information-oriented Control (ITR), Department of Electrical and Computer Engineering,
Technical University of Munich, D-80333 Munich\newline
{\tt\small \{t.beckers, hirche\}@tum.de}}
}
\newtheorem{defn}{Definition}
\newtheorem{rem}{Remark}
\newtheorem{thm}{Theorem}
\newtheorem{lem}{Lemma}
\newtheorem{assum}{Assumption}
\newcommand\tran{\mkern-2mu\raise1.25ex\hbox{$\scriptscriptstyle\top\hspace{0.5mm}$}\mkern-3.5mu}
\newcommand{\R}{\mathbb{R}}
\newcommand{\N}{\mathbb{N}}
\newcommand{\X}{\mathcal{X}}
\newcommand{\bm}[1]{{\boldsymbol{#1}}}
\newcommand{\Verts}[1]{{\left\Vert #1 \right\Vert}}
\DeclareMathOperator{\diag}{diag}
\DeclareMathOperator{\var}{var}
\DeclareMathOperator{\mean}{\mu}
\DeclareMathOperator{\Var}{\Sigma}
\DeclareMathOperator{\Mean}{\bm\mu}
\newcommand{\dx}{\dot{\bm x}}
\newcommand{\x}{\bm x}
\newcommand{\iu}[1][]{\bm{u}_{#1}}
\newcommand{\dyn}{\bm{f}}
\newcommand{\y}[1][]{\bm{y}_{#1}}
\newcommand{\leig}{\underline{\lambda}}
\newcommand{\geig}{\bar{\lambda}}
\crefname{rem}{Remark}{Remarks}
\crefname{assum}{Assumption}{Assumptions}
\crefname{prop}{Proposition}{Propositions}
\crefname{cor}{Corollary}{Corollaries}
\crefname{lem}{Lemma}{Lemmas}
\crefname{thm}{Theorem}{Theorems}
\crefname{defn}{Definition}{Definitions}
\crefname{figure}{Fig.}{Fig.}
\Crefname{figure}{Figure}{Figures}
\crefname{equation}{}{}
\begin{document}

\maketitle
\thispagestyle{empty}
\pagestyle{empty}

\begin{abstract}
The paper addresses the problem of passivation of a class of nonlinear systems where the dynamics are unknown. For this purpose, we use the highly flexible, data-driven Gaussian process regression for the identification of the unknown dynamics for feed-forward compensation. The closed loop system of the nonlinear system, the Gaussian process model and a feedback control law is guaranteed to be semi-passive with a specific probability. The predicted variance of the Gaussian process regression is used to bound the model error which additionally allows to specify the state space region where the closed-loop system behaves passive. Finally, the theoretical results are illustrated by a simulation.
\end{abstract}


\section{Introduction}
Passivity-based techniques allow the analysis and synthesis of large and complex systems because of the particular composition properties, e.g. the parallel and feedback interconnection of passive sub-systems gives a passive overall system. The passivity property is also helpful for the interconnection with other systems which are mostly unknown but assumed to be passive such as in telepresence systems~\cite{nuno2011passivity}, robot manipulation~\cite{erhart2015b} or physical human-robot interaction (pHRI)~\cite{de2008atlas}.
Hence, passive systems possess very useful and beneficial properties which make them so interesting in control theory and also in real-world applications. However, many modern engineering systems are not inherently passive or even stable, e.g. high-performance aircraft. Thus, to take advantage of the passivity properties, these systems need to be rendered passive by control, e.g. with suitable state feedback or using passivity-based control (PBC)~\cite{huang1999design}.\\
The arising problem is that these techniques require a suitable storage function or, at least, knowledge about the system dynamics and structure~\cite{ortega2013passivity,seron1994adaptive}.
However, the underlying dynamics are often hard to obtain using first-order principles because of the complexity of the system or the unacceptable time exposure of the modeling process. Especially in modern control applications such as autonomous robotics or human-centered control, the modeling process is very challenging or even unfeasible.\\
A promising approach to avoid these issues is provided by data-driven Gaussian process regression (GPR)~\cite{deisenroth2015gaussian}.
GPR is a supervised learning technique which combines several advantages. It requires only a minimum of prior knowledge for the regression of arbitrary complex functions since the complexity of the model scales with the amount of training data~\cite{rasmussen2006gaussian}. 
Additionally, it generalizes well even for small training data sets and it has a precise trade-off between fitting the data and smoothing.
In comparison to neural networks, GPR provides not only a mean function but also a predicted variance, and therefore a measure of the model fidelity based on the distance to the training data. This is a significant benefit since this information can be used for stability considerations~\cite{beckers:ifacwc2017}.\\

On the context of classical parametric dynamic system models, approaches for the passivation of linear and nonlinear systems are proposed in~\cite{fradkov2003passification,byrnes1991passivity}. However, both approaches assume that the underlying system dynamics and structure is known. The identification of dynamical systems with Gaussian processes is performed in~\cite{kocijan2005dynamic} but without considering stability or passivity. The stability of Gaussian process based systems is numerically evaluated in~\cite{vinogradska2016stability}. Recently, also analytical results about the stability are provided~\cite{beckers:ifacwc2017,Berkenkamp2016ROA}. However, all these approaches do not investigate the passivity of the closed loop system.\\
The contribution of the paper is the passivation of a class of nonlinear systems where the dynamics of the system is unknown. For this purpose, a Gaussian process regression is used to learn the unknown dynamics. The mean of the GPR is exploited for the feed-forward compensation of the dynamics. We show that the closed loop of the unknown dynamics, the GPR and a feedback control law is semi-passive with a specific probability. Additionally, we explicitly determine the state space region in which the systems behaves passive.\\
The remainder of the paper starts with Section~\ref{sec:pre} where the class of systems and GPR are introduced. Section~\ref{sec:modeling} describe the computation of the model error and the proof of semi-passivity. The method is validated in Section~\ref{sec:sim}.
\section{Preliminaries}
\label{sec:pre}
\subsection{Considered class of systems}
In this paper, we consider the class of nonlinear systems which are described by\footnotemark
\begin{align}
\dx&=\begin{bmatrix}
\dx_1\\ \dx_2
\end{bmatrix}
=\begin{bmatrix}
\x_2\\ \dyn(\x,\iu)
\end{bmatrix}\notag\\
\y[ex]&=c\x_1+\x_2,\,c\in\R_{>0}\label{for:system}
\end{align}
with the measurable state~$\x\in\X^2\subseteq\R^{2n}$ where~$\x_1,\x_2\in\X$ and the input~$\iu\in\R^n$ with~$n\in\N$. The continuous vector field~$\dyn\colon\X^2\times\R^n\to\R^n$ is assumed to be unknown. 
\begin{rem}
	This class of systems contains for example many electrical and mechanical systems which fulfill the Euler-Lagrange equation, e.g. robot manipulators. The systems do not need to be control affine. The output~$\y[ex]\in\R^n$ is often used in interconnection scenarios of mechanical systems where it represents a velocity plus scaled position feedback.	
\end{rem}
\footnotetext{\textbf{Notation:} Matrices are described with capital letters while vectors are denoted with bold characters. The term~$M_{:,i}$ denotes the i-th column of the matrix~$M$. The expression~$\mathcal{N}(\mu,\Sigma)$ is the normal distribution with mean~$\mu$ and covariance~$\Sigma$. The Euclidean norm is given by~$\Vert\cdot\Vert$ and the largest eigenvalue of a matrix by~$\geig$ and the smallest by~$\leig$.}
The problem is to find an input~$\iu$ such that the system~\cref{for:system} becomes passive.
\subsection{Semi-passivity}
The concept of passivity is well known whereas the theory of semi-passive system is less frequently used so that we recall the definition.
\begin{defn}
\label{def:semipassiv}
Following~\cite{pogromsky1998passivity}, the system~\cref{for:system} is called
	\begin{itemize}
		\item[1.] semi-passive in~$D_x$ if there exists a nonnegative function~$V\colon D_x\to\R_{\geq 0}$ where~$V(0)=0$ such that
		\begin{align}
			\dot{V}(\x)&=\frac{\partial V}{\partial \x_1}\x_2+\frac{\partial V}{\partial \x_2}\dyn(\x,\iu[ex])\notag\\
			&\leq\y[ex]^\top\iu[ex]-h(\x).
		\end{align}
		The passive output~$\y[ex]\in\R^n$ is state-dependent and the function~$h\colon D_x\to\R$ is nonnegative outside the ball~$B_r\subset D_x$ with radius~$r$, i.e.
		\begin{align}
			\exists r>0,\Verts{\x}\geq r\Rightarrow h(\x)\geq 0.
		\end{align}
		\item[2.] strictly semi-passiv in~$D_x$ if the system is semi-passive and the function~$h(\x)$ is positive outside some ball~$B_r$.
	\end{itemize}
\end{defn}
Hence, the behavior of semi-passive systems is comparable to passive systems outside the ball~$B_r$, see~\cref{fig:semipassiv}. 
Additionally, a feedback interconnection with another passive system has an ultimately bounded solution~\cite{pogromsky1998passivity} so that every trajectory of the closed-loop systems enters a compact set in finite time and remains there.
\subsection{Gaussian Process Regression}
Assume a vector-valued, nonlinear function~$\y=\bm f_{GP}(\x)$ with~$\bm f_{GP}\colon \R^n\to \R^n$ and~$\y\in\R^n$. The measurement values~$\tilde{\y}\in\R^n$ of the function are corrupted by Gaussian noise~$\bm\eta\in\R^n$, i.e.
\begin{align}
	\tilde{\y}&=\bm f_{GP}(\x)+\bm\eta\\
	\bm\eta&\sim\mathcal{N}(\bm 0,\diag (\sigma_{1}^2,\ldots,\sigma_{n}^2))
\end{align}
with the standard deviation~$\sigma_{1},\ldots,\sigma_{n}\in\R_{\geq 0}$. For the regression, the function is evaluated at~$m$ input values~$\{\bm x^{\{j\}}\}_{j=1}^m$. Together with the resulting measurements~$\{\tilde{\y}^{\{j\}}\}_{j=1}^m$, the whole training data set is described by~$\mathcal D=\{X,Y\}$ with the input training matrix~$X=[\bm x^{\{1\}},\bm x^{\{2\}},\ldots,\bm x^{\{m\}}]\in\R^{n\times m}$ and the output training matrix~$Y=[\tilde{\y}^{\{1\}},\tilde{\y}^{\{2\}},\ldots,\tilde{\y}^{\{m\}}]^\top\in\R^{m\times n}$. Now, the objective is to predict the output of the function~$\y^*$ at a test input~$\x^*\in\R^n$.\\
The underlying assumption of Gaussian process regression is that the data can be represented as a sample of a multivariate Gaussian distribution. The joint distribution of the~$i$-th component of~$\y^*$ is 
\begin{align}
	\begin{bmatrix} Y_{:,i} \\ y^*_i \end{bmatrix}\sim \mathcal{N} \left(\bm m(\x), \begin{bmatrix} K_{\varphi_i}(X,X) & \bm k_{\varphi_i}(\x^*,X)\\ \bm k_{\varphi_i}(\x^*,X)\tran & 				k_{\varphi_i}(\x^*,\x^*) \end{bmatrix}\right)\label{for:joint}
\end{align} 
with the covariance function~$k_{\varphi_i}(\x,\x^\prime)\colon\R^n\times\R^n\to\R$ as a measure of the correlation of two points~$(\bm x,\bm x^\prime)$. The function~$K_{\varphi_i}(X,X)\colon\R^{n\times m}\times \R^{n\times m}\to\R^{m\times m}$ is called the covariance or Gram matrix 
\begin{align}
	K_{j,l}= k_{\varphi_i}(X_{:, l},X_{:, j})
\end{align}
with~$j,l\in\lbrace 1,\ldots,m\rbrace$ where each element of the matrix represents the covariance between two elements of the training data~$X$. The vector-valued covariance function~$\bm k_{\varphi_i}(\x,X)\colon\R^n\times \R^{n\times m}\to\R^m$ calculates the covariance between the test input~$\x^*$ and the training data~$X$  
\begin{align}
	\bm k_{\varphi_i}(\x^*,X)\text{ with }k_{\varphi_i,j} = k_{\varphi_i}(\x^*,X_{:, j})
\end{align}
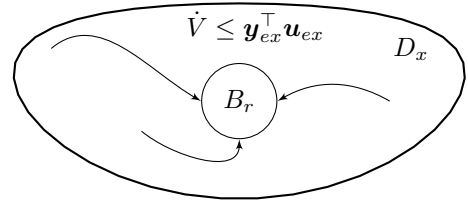
\begin{figure}[t]
\vspace{0.3cm}
	\begin{center}	
	 	\begin{tikzpicture}[auto,>=latex']
\draw [black, thick,  domain=0:360, samples=40] plot ({3*sin(\x)}, {1.3*cos(\x)+(3*sin(\x))^2/25} );
 \node[draw,circle,minimum height=1cm] (ball) {$B_r$};
\node[text width=0.5cm] at (2.3,0.7) {$D_x$};
\node[text width=2cm] at (0.3,0.99) {$\dot{V}\leq \y[ex]^\top\iu[ex]$};
\draw [->] (2,0) to [out=150,in=30] (ball.east);
\draw [->] (-2.5,0.7) to [out=40,in=150] (ball.west);
\draw [->] (-1.3,-0.4) to [out=-40,in=270] (ball.south);
\end{tikzpicture}
		\caption{Concept of semi-passivity. The system behaves passive in~$D_x\backslash B_r$.\label{fig:semipassiv}}
	\end{center}
	\vspace{-0.3cm}
\end{figure}
for all~$j\in\lbrace 1,\ldots,m\rbrace$ and~$i\in\lbrace 1,\ldots,n\rbrace$. These functions depend on a set of hyperparameters~$\varphi_i$ whose number of parameters depends on the function used. The choice of the covariance function and the corresponding hyperparameters can be seen as degrees of freedom of the regression. A comparison for the characteristics of the different covariance functions can be found in~\cite{bishop2006pattern}\\
The prediction of each component of~$\y^*$ is derived from the joint distribution~\cref{for:joint} and therefore it is a Gaussian distributed variable. The conditional probability distribution is defined by the mean
\begin{align}
	\mean(y^*_i\vert \x^*,\mathcal D)&=\bm k_{\varphi_i}(\x^*,X)\tran (K_{\varphi_i}+I \sigma^2_{i})^{-1}Y_{:,i},
\end{align}
where~$I$ is the identity matrix, and the variance
\begin{align}
	\var(y^*_i\vert \x^*,\mathcal D)&=k_{\varphi_i}(\x^*,\x^*)-\bm k_{\varphi_i}(\x^*,X)\tran \notag\\
	& \phantom{{}=}(K_{\varphi_i}+I \sigma^2_{i})^{-1} \bm k_{\varphi_i}(\x^*,X).
\end{align}
For the multi-variable Gaussian distribution, the~$n$ normally distributed components of~$\y^*\vert \x^*,\mathcal D$ are concatenated such that
\begin{align}
	\y^*\vert \x^*,\mathcal D &\sim \mathcal{N} (\bm\mean(\cdot),\Var(\cdot))\notag\\
	\bm \mean(\y^*\vert \x^*,\mathcal D)&=[\mean(y^*_1\vert \x^*,\mathcal D),\ldots,\mean(y^*_n\vert \x^*,\mathcal D)]\tran\notag\\
	\Var(\y^*\vert \x^*,\mathcal D)&=\diag(\var(y^*_1\vert \x^*,\mathcal D),\ldots,\var(y^*_n\vert \x^*,\mathcal D))\label{for:multigp},
\end{align} 
where the hyperparameters~$\varphi_i$ are optimized by means of the marginal likelihood function~\cite{rasmussen2006gaussian}. For this purpose, a gradient based algorithm is often used to find a (local) maximum of the marginal log-likelihood function
\begin{align}
	\varphi_i^* = \arg\max_{\varphi_i} \log P(Y_{:,i}|X,\varphi_i),\,\forall i=1,\ldots,n
\end{align}
to achieve suitable hyperparameters.
\section{Main Result}
\label{sec:modeling}
For the passivation of the system~\cref{for:system}, a closed loop with a GPR and a feedback control law is proposed. The GPR is used as feed-forward compensation of the unknown dynamics so that the drift function of the closed-loop is bounded. Based on this, a feedback control law is exploited to render the system strictly semi-passive. For this purpose, the input~$\iu$ of the system~\cref{for:system}
\begin{align}
	\iu=\iu[c]+\iu[gp]-\iu[ex],\label{for:inter}
\end{align}
is decomposed into a feed-forward component~$\iu[gp]\in\R^n$, a feedback control law~$\iu[c]\in\R^n$, and an additional external input~$\iu[ex]\in U\subset\R^n$, as shown in~\cref{fig:inter}. The control law is given by
\begin{align}
	\iu[c]=K_d\x_2+K_p\x_1\label{for:upd}
\end{align}
with positive definite, symmetric matrices~$K_p,K_d\in\R^{n\times n}$. 
\begin{rem}
For~\cref{for:system} with output~$\y=\x_1$, the control law~\cref{for:upd} is equivalent to a PD control law.
\end{rem}
For the rest of the paper, we assume the following properties for the passivation.
\begin{assum}\label{as:set}
	Consider the closed sets~$D_x\subseteq\X^2$ together with $D_{\dot{x}}\subset\R^{n}$ in the neighborhood of~$\bm 0$ such that
	\begin{align}
		\{\bm{z}\in\R^n\vert\Verts{z}<\bar{k}_p\Verts{\x_1}+\bar{k}_d\Verts{\x_2}+\Verts{\iu[ex]}\}\subseteq D_{\dot{x}},
	\end{align}
	holds for all $\x\in D_x,\iu[ex]\in U$ and the positive constants~$\bar{k}_d>c$ and~$\bar{k}_p>\max\{c\bar{k}_d^2/(4\bar{k}_d-4c),c^2\}$. 
\end{assum}
This assumption guarantees that~$\dx_2$ of the closed loop system is always element of~$D_{\dot{x}}$ which is required for the computation of the model error. The size of the set~$D_{\dot{x}}$ can be computed by~$D_x,U$ and~$c$.
\begin{assum}\label{as:mapping}
	For all~$\x\in D_x,\dx_2\in D_{\dot{x}}$ the mapping between~$\x,\dx_2$ and the input~$\iu$ must be unique, so that there exists a function~$\dyn^{-1}\colon D_x\times D_{\dot{x}}\to \R^n$ with
	\begin{align}
		(\x,\dx_2)\mapsto \dyn^{-1}(\x,\dx_2)=\iu
	\end{align}
\end{assum}
Thus,~\cref{for:system} is restricted to systems which are explicitly solvable to the input~$\iu=\dyn^{-1}(\x,\dx_2)$ on~$D_x \times D_{\dot{x}}$. This assumption holds for a large class of dynamical systems such as control affine systems with fully-ranked input matrix, e.g. many Lagrangian systems. With~\cref{as:set,as:mapping}, the system~\cref{for:system} with input~\cref{for:inter} can be rewritten as 
\begin{align}
\dx_1&=\x_2\notag\\
\dx_2&=\tilde{\dyn}(\x,\dx_2)-K_d\x_2-K_p\x_1-\iu[gp]+\iu[ex]\label{for:cl}
\end{align}
for all~$\x\in D_x$. The function~$\tilde{\dyn}\colon D_x\times D_{\dot{x}}\to \R^n$ is defined as~$\tilde{\dyn}(\x,\dx_2)=\dyn^{-1}(\x,\dx_2)+\dx_2$. The output of the GPR~$\iu[gp]$ is produced with the predicted mean of the unknown function values~$\y=\tilde\dyn(\x,\dx_2),\y\in\R^n$, i.e. 
\begin{align}
	\iu[gp]=\Mean(\y\vert\x,\dx_2,\mathcal D)\label{for:ugp}
\end{align}
computed with~\cref{for:multigp} and based on a the current state~$\x$ and~$\dx_2$ of the system. For this purpose, the training data~$\mathcal D$ set of the Gaussian process model is based on~$m$ training data pairs
\begin{align}
	\mathcal D=\left\lbrace \begin{bmatrix} \dx_2 \\ \x \end{bmatrix}^{\{i\}},\tilde\y^{\{i\}}\right\rbrace_{i=1}^{m}\label{GPlearning}
\end{align}
where the training data~$\tilde\y=\iu-\dx_2+\bm\epsilon$ of the system~\cref{for:system} is corrupted by Gaussian noise~$\bm\epsilon\sim\mathcal{N}(\bm 0,\diag (\sigma_{1}^2,\ldots,\sigma_{n}^2))$. The data can be generated by using any controller that behaves well-enough to produce a finite set of training points.
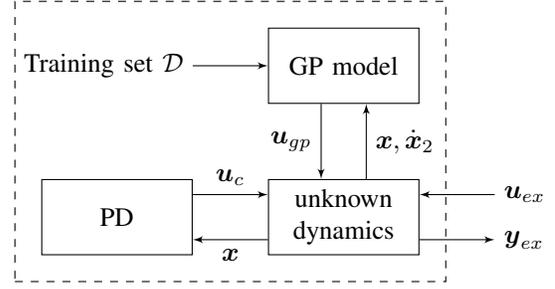
\begin{figure}[t]
\vspace{0.2cm}
	\begin{center}	
	 	\begin{tikzpicture}[auto,>=latex']
\tikzstyle{block} = [draw, fill=white, rectangle, minimum height=1cm, minimum width=2cm, align=center]
\tikzstyle{sum} = [draw, fill=white, circle, node distance=1cm]
\tikzstyle{input} = [coordinate]
\tikzstyle{output} = [coordinate]
\tikzstyle{t_output} = []
\tikzstyle{pinstyle} = [pin edge={to-,thin,black}]

    \node [block] (controller) {PD};
    \node [block, right= 1cm of controller] (system) {unknown\\dynamics};
    \node [block, above= 1 cm of system] (GPR) {GP model};
    
    \node [input, above right = 0.3cm and 1cm of system.east] (input) {};
    \node [output, below = 0.6cm of input] (output) {};
    \node [input, left= 1cm of GPR] (training) {};
    
    \draw [->] (training) -- node[anchor=east,pos=0,align=left] {Training set $\mathcal D$} (GPR.west);
	\draw [->] (input) -- node[anchor=west,pos=0] {$\iu[ex]$} ([yshift=0.3cm]system.east);
	\draw [<-] (output) -- node[anchor=west,pos=0] {$\y[ex]$} ([yshift=-0.3cm]system.east);

	\draw [->] ([yshift=0.3cm]controller.east) -- node[above] {$\iu[c]$} ([yshift=0.3cm]system.west);
	\draw [<-] ([yshift=-0.3cm]controller.east) -- node[below] {$\x$} ([yshift=-0.3cm]system.west);

	\draw [->] ([xshift=-0.3cm]GPR.south) -- node[left] {$\iu[gp]$} ([xshift=-0.3cm]system.north);
	\draw [<-] ([xshift=0.3cm]GPR.south) -- node[right] {$\x,\dx_2$} ([xshift=0.3cm]system.north);
		
	\node[draw,dashed,inner xsep=0.35cm,inner ysep=0.35cm,fit=(controller) (system) (GPR)] {};

\end{tikzpicture}
		\caption{Semi-passively rendered w.r.t.~$\iu[ex]$ and~$\y[ex]$.\label{fig:inter}}
	\end{center}
	\vspace{-0.7cm}
\end{figure}
\subsection{Model error}
After the learning procedure, it is possible to compute an upper bound for the error between the mean prediction~$\iu[gp]=\Mean(\y\vert\x,\dx_2,\mathcal D)$ of the Gaussian process model and the function $\tilde{\dyn}$ of~\cref{for:cl}. For this purpose, the covariance function must be selected in such a way that the function~$\tilde{\dyn}$ is an element of the associated RKHS.
\begin{assum}\label{as:rkhs}
The function~$\tilde{\dyn}(\x,\x_2)$ has a bounded reproducing kernel Hilbert space (RKHS) norm in respect to the covariance function~$k(\cdot,\cdot)$, so that~$\Verts{\tilde{\dyn}}_k<\infty$ on~$D_x\times D_{\dot{x}}$.
\end{assum}
This seems to be paradoxical since the function is assumed to be unknown. However, there exist some covariance functions, so called universal kernel functions, which can approximate any continuous function arbitrary precisely on a compact set~\cite[Lemma 4.55]{steinwart2008support}, e.g. the squared exponential covariance function. Therefore, many dynamics can be covered by the universal covariance function so that this assumption is not at all restrictive. A more detailed discussion about RKHS norms and covariance functions is given by~\cite{wahba1990spline}\\
 A variance dependent bound for the scalar case is presented in~\cite{srinivas2012information} and is here extended to an absolute bound for multidimensional predictions in the following lemma.
\begin{lem}
	\label{lemma:boundederror}
	Consider the system~\cref{for:system} satisfying~\cref{as:rkhs,as:set,as:mapping} and a Gaussian process model based on~\cref{GPlearning}. The model error is bounded with a~$\bm\Delta\in\R_{\geq 0}^n,\bar\Delta\in\R_{\geq 0}$ by\footnote{For notational reasons, we suppress the conditional part of the predicted mean and variance}
	\begin{align}
	\text{P}\left\lbrace \Verts{\Mean(\y)-\tilde\dyn(\x,\dx_2)}\leq \Verts{\bm\Delta^\top\Var^{\frac{1}{2}}(\y)}  \right\rbrace&\geq \delta\notag\\
		 \Verts{\bm\Delta^\top\Var^{\frac{1}{2}}(\y)}&\leq\bar\Delta\label{for:upbound}
	\end{align}
	for all~$\x\in D_x,\dx_2\in D_{\dot{x}}$ with~$\delta\in(0,1)$.
\end{lem}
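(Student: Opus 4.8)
The plan is to reduce the multivariate bound to the scalar, variance‑dependent concentration inequality of~\cite{srinivas2012information} applied coordinate‑wise to $\tilde{\dyn}$, to combine the $n$ resulting high‑probability events by a union bound — exploiting that the predictive covariance in~\eqref{for:multigp} is diagonal — and finally to turn the variance‑dependent bound into the absolute bound $\bar\Delta$ using that the posterior variance is uniformly bounded on the compact domain $D_x\times D_{\dot{x}}$.

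First, I would fix a coordinate $i\in\{1,\dots,n\}$ and write $\tilde{\dyn}_i$ for the $i$-th component of $\tilde{\dyn}$. By~\cref{as:rkhs}, $\tilde{\dyn}_i$ has finite norm in the RKHS of $k_{\varphi_i}$, and by~\eqref{GPlearning} the column $Y_{:,i}$ collects $m$ evaluations of $\tilde{\dyn}_i$ corrupted by zero‑mean Gaussian — hence sub‑Gaussian — noise of variance $\sigma_i^2$; these are exactly the hypotheses of the scalar bound of~\cite{srinivas2012information}. That bound then supplies, for every confidence level $\beta\in(0,1)$, a constant $\Delta_i=\Delta_i(\beta)\ge 0$ — assembled from the RKHS‑norm bound of $\tilde{\dyn}_i$, the noise level $\sigma_i$ and the maximal information gain $\gamma_m$ of $m$ observations (finite for a continuous kernel and a fixed $m$) — such that, with the conditioning on $\x,\dx_2,\mathcal D$ suppressed as in the statement, $\Verts{\mean(y_i)-\tilde{\dyn}_i(\x,\dx_2)}\le\Delta_i\sqrt{\var(y_i)}$ holds for all $\x\in D_x$ and $\dx_2\in D_{\dot{x}}$ with probability at least $\beta$.

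Next I would set $\beta=1-(1-\delta)/n$ in each coordinate and let $\mathcal E$ be the event on which all $n$ of these inequalities hold simultaneously; a union bound gives $\text{P}(\mathcal E)\ge 1-\sum_{i=1}^{n}(1-\beta)=\delta$. With $\bm\Delta=[\Delta_1,\dots,\Delta_n]^\top\in\R_{\geq 0}^{n}$ and $\Var(\y)=\diag(\var(y_1),\dots,\var(y_n))$ from~\eqref{for:multigp}, one has $\bm\Delta^\top\Var^{\frac{1}{2}}(\y)=[\Delta_1\sqrt{\var(y_1)},\dots,\Delta_n\sqrt{\var(y_n)}]$, so that on $\mathcal E$
\begin{align}
\Verts{\Mean(\y)-\tilde{\dyn}(\x,\dx_2)}^{2}&=\sum_{i=1}^{n}\left(\mean(y_i)-\tilde{\dyn}_i(\x,\dx_2)\right)^{2}\notag\\
&\le\sum_{i=1}^{n}\Delta_i^{2}\,\var(y_i)=\Verts{\bm\Delta^\top\Var^{\frac{1}{2}}(\y)}^{2},\notag
\end{align}
and a square root yields the first inequality of the lemma with probability at least $\delta$. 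For the second, I would use that the posterior variance never exceeds the prior one, so $\var(y_i)$ is dominated by $k_{\varphi_i}$ evaluated on the diagonal, which is continuous — in fact constant for a stationary kernel such as the squared exponential — and hence bounded on the compact set $D_x\times D_{\dot{x}}$; consequently $\Verts{\bm\Delta^\top\Var^{\frac{1}{2}}(\y)}=\left(\sum_{i=1}^{n}\Delta_i^{2}\,\var(y_i)\right)^{1/2}$ is bounded there, and any such bound — e.g.\ its supremum over the domain — may be taken as $\bar\Delta$.

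The main obstacle I anticipate is the first step: one has to make sure the \emph{uniform-in-$(\x,\dx_2)$} form of the inequality of~\cite{srinivas2012information} is legitimate here — i.e.\ that the regression target is a \emph{fixed} element of the RKHS (which is what~\cref{as:rkhs} together with~\cref{as:mapping} provides, since $\tilde{\dyn}(\x,\dx_2)=\dyn^{-1}(\x,\dx_2)+\dx_2$ does not depend on the Gaussian process), that the measurement noise in~\eqref{GPlearning} is sub‑Gaussian, and that the information gain $\gamma_m$ remains finite on the compact domain. The union‑bound step over the $n$ coordinates and the variance‑boundedness estimate are then routine.
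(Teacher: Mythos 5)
Your proposal is correct and follows essentially the same route as the paper: apply the scalar RKHS/information-gain bound of~\cite{srinivas2012information} coordinate-wise, combine the $n$ events, and then bound $\Verts{\bm\Delta^\top\Var^{\frac{1}{2}}(\y)}$ by a constant using boundedness of the posterior variance on the compact set $D_x\times D_{\dot{x}}$. The only (immaterial) difference is how the coordinate events are combined: you use a union bound yielding confidence $1-n(1-\beta)$, while the paper exploits the independence of the per-coordinate noise to take the product $(1-\delta_{sc})^n$; both are valid and give the claimed form.
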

\begin{proof}
Following \cite[Theorem 6]{srinivas2012information}, the elements of~$\Delta$ are defined by
	\begin{align}
		\Delta_j&=\sqrt{2\Verts{\tilde{f}_j}^2_k+300 \gamma_j \ln^3\left(\frac{m+1}{\delta}\right)}
		\end{align}
		where~$\gamma_j\in\R$ is the maximum information gain, i.e.
		\begin{align}
		\gamma_j&=\max_{X}\frac{1}{2}\log \vert I+\sigma_i^{-2}K_{\varphi_j}(X,X)\vert\\
		X&=\begin{bmatrix}
		\begin{bmatrix}\dx_2\\ \x\end{bmatrix}^{\{1\}},\ldots,\begin{bmatrix}\dx_2\\ \x\end{bmatrix}^{\{m+1\}}
		\end{bmatrix}\text{, where}\\
		&\begin{bmatrix}\dx_2\\ \x\end{bmatrix}^{\{i\}}\in D_x\times D_{\dot{x}}
	\end{align}
for~$i=1,\ldots,m+1$. With~\cref{as:rkhs,as:set} and the fact that~$\bm\epsilon$ is uncorrelated, the model error of a multidimensional prediction for all~$(\x,\dx_2)\in D_x\times D_{\dot{x}}$ is given by
\begin{align}
\text{P}\left\lbrace \smashoperator{\bigcap\limits_{\qquad j=1,\ldots,n}} \vert\mean(y_j)-\tilde{f}_j(\x,\dx_2)\vert\leq \vert \Delta_j \var^{\frac{1}{2}}(y_j)\vert\right\rbrace&\geq (1-\delta_{sc})^n\notag\\
\Rightarrow\text{P}\left\lbrace \Verts{\Mean(\y)-\dyn(\x,\dx_2)}\leq \Verts{\bm \Delta\tran \Var^{\frac{1}{2}}(\y)}\right\rbrace&\geq (1-\delta_{sc})^n\label{for:prob}
\end{align}
with~$\delta_{sc}\in(0,1)$. Since~$\bm\Delta$ is finite and the variance is also bounded on a closed set~\cite{beckers:cdc2016}, it exists a constant~$\bar\Delta\in\R_{>0}$ which bounds~$\Verts{\bm \Delta\tran \Var^{1/2}(\y)}\leq\bar\Delta$ in~\cref{for:prob}. Thus, the model error is bounded with a probability of at least~$\delta\coloneqq (1-\delta_{sc})^n$ by~$\bar\Delta$.
\end{proof}
\begin{rem}
The information capacity~$\bm\gamma$ has a sub-linear dependency on the number of training points for many commonly used covariance functions, e.g. the squared exponential covariance function, and can be bounded by a constant~\cite{srinivas2012information}. Therefore, even though~$\Verts{\bm \Delta}$ is increasing with the number of training data, it is possible to learn the true function~$\tilde{\dyn}(\x,\dx_2)$ arbitrarily exactly~\cite{Berkenkamp2016ROA}. 
\end{rem}
The result of~\cref{lemma:boundederror} is a an upper bound for the model error. The stochastic nature of the bound is due to the fact that just a finite number of noisy training points are available and thus, the true function cannot be known exactly. If exact knowledge of the model was available, the variance of the GPR would be zero and thus, the upper bound for the model error would also be zero. With an increasing number of training points or decreasing noise~$\sigma$ of the training data, the bound becomes tighter~\cite{umlauft:cdc2017}. Since the model is used for a feed-forward compensation of the unknown dynamics of the systems, the model error directly effects the size of the set where the system behaves passive as shown in the next section.
\subsection{Passivation}
\label{sec:passive}
Before we present the main theorem about the passivity of the closed loop system, the following definition and lemmas are introduced.
\begin{defn}
Let~$\Lambda$ be a matrix-valued function which maps from $\R^{n\times n}\times\R^{n\times n}\times\R_{>0}\to\R^{2n\times 2n}$ with
\begin{align}
\Lambda(K_d,K_p,c)\coloneqq \begin{bmatrix} K_d-c I & \frac{c}{2}K_d\\\frac{c}{2}K_d & c K_p\end{bmatrix}.
\end{align}
\end{defn}
\begin{lem}
\label{lem:pdf}
For any~$c,\leig_d\in\R_{>0}$, there exist positive definite and symmetric matrices~$K_d,K_p\in\R^{n\times n}$, so that 
\begin{align}
\leig\left(\Lambda(K_d,K_p,c)\right)\geq \leig_d.
\end{align}
\end{lem}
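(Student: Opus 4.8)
The plan is to reduce the problem to a $2\times 2$ inequality by looking for solutions of the restricted form $K_d=\alpha I$ and $K_p=\beta I$ with scalars $\alpha,\beta\in\R_{>0}$; such matrices are automatically symmetric and positive definite. Under this ansatz,
$$\Lambda(K_d,K_p,c)=\begin{bmatrix}(\alpha-c)I_n & \tfrac{c\alpha}{2}I_n\\ \tfrac{c\alpha}{2}I_n & c\beta I_n\end{bmatrix}=M\otimes I_n,\qquad M\coloneqq\begin{bmatrix}\alpha-c & \tfrac{c\alpha}{2}\\ \tfrac{c\alpha}{2} & c\beta\end{bmatrix},$$
so that, since the eigenvalues of a Kronecker product are the pairwise products of the eigenvalues of the factors, $\eig(\Lambda(K_d,K_p,c))=\eig(M)$, and in particular $\leig(\Lambda(K_d,K_p,c))=\leig(M)$. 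It therefore suffices to choose $\alpha,\beta$ such that $M-\leig_d I_2$ is positive semidefinite.

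First I would require $\alpha>c+\leig_d$, which makes the $(1,1)$-entry $\alpha-c-\leig_d$ of $M-\leig_d I_2$ strictly positive; by the Schur complement criterion, $M-\leig_d I_2\succeq 0$ is then equivalent to the single scalar inequality
$$(\alpha-c-\leig_d)\left(c\beta-\leig_d\right)\geq \frac{c^2\alpha^2}{4}.$$
Now I would fix any admissible $\alpha$, e.g. $\alpha=c+\leig_d+1$; the left factor is then a fixed positive constant, so the inequality holds for all sufficiently large $\beta$, namely $\beta\geq \tfrac{1}{c}\left(\leig_d+\tfrac{c^2\alpha^2}{4(\alpha-c-\leig_d)}\right)$. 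Any such $\beta$ is positive, and $\alpha>0$, so $K_d=\alpha I\succ 0$ and $K_p=\beta I\succ 0$ as required; moreover $M\succeq\leig_d I_2$, hence $\leig(\Lambda(K_d,K_p,c))=\leig(M)\geq\leig_d$, which proves the claim.

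The only mildly non-routine ingredient is the spectral identity $\eig(\Lambda)=\eig(M)$; if one prefers to avoid Kronecker products, the same conclusion follows by applying the Schur complement directly to the block matrix $\Lambda(K_d,K_p,c)-\leig_d I_{2n}$ with $K_d=\alpha I$: the leading block $(\alpha-c-\leig_d)I_n$ is positive definite for $\alpha>c+\leig_d$, and its Schur complement $\left(c\beta-\leig_d-\tfrac{c^2\alpha^2}{4(\alpha-c-\leig_d)}\right)I_n$ is positive semidefinite under exactly the lower bound on $\beta$ above. Either route is elementary once the scalar ansatz is made; one could trade the simple choice $\alpha=c+\leig_d+1$ for a sharper constant by minimizing the required $\beta$ over $\alpha$, but this is not needed for the statement.
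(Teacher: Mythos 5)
Your proof is correct, and it takes a genuinely different route from the paper's. The paper works with arbitrary positive definite $\tilde{K}_d,\tilde{K}_p$: it first uses the Schur complement to show that the matrix $\tilde{\Lambda}_M$ obtained by dropping the $-cI$ term can be made positive definite, and then writes $\Lambda(\gamma\tilde{K}_d,\gamma\tilde{K}_p,c)=\gamma\tilde{\Lambda}_M+\diag(-cI,0)$ and invokes Weyl's inequality $\leig(A+B)\geq\leig(A)+\leig(B)$ to push the smallest eigenvalue above $\leig_d$ by taking the scaling $\gamma$ large. You instead restrict to the scalar ansatz $K_d=\alpha I$, $K_p=\beta I$, which collapses the $2n\times 2n$ block matrix to (n copies of) a $2\times 2$ matrix $M$, and you verify $M\succeq\leig_d I_2$ directly via a single determinant/Schur-complement inequality. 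Since the lemma only asserts existence, the ansatz loses nothing, and your argument is more elementary (no eigenvalue perturbation bound needed) and yields fully explicit admissible gains, e.g. $\alpha=c+\leig_d+1$ and the closed-form lower bound on $\beta$; these explicit constants would also make it easy to check compatibility with the upper bounds $\bar{k}_d,\bar{k}_p$ appearing in the companion lemma and in \cref{as:set}. What the paper's argument buys in exchange is that it exhibits solutions within the much richer family of non-scalar gain matrices, which is the form actually allowed in the control law. Both proofs are valid; one small point of care in yours is that the Schur-complement equivalence for $M-\leig_d I_2\succeq 0$ requires the leading entry to be strictly positive, which you correctly enforce by taking $\alpha>c+\leig_d$.
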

\begin{proof}
Assuming the positive definite, symmetric matrices~$\tilde{K}_d,\tilde{K}_p\in\R^{n\times n}$. The matrix~$\tilde{\Lambda}_M\in\R^{2n\times 2n}$ with
\begin{align}
\tilde{\Lambda}_M=\begin{bmatrix} \tilde{K}_d & \frac{c}{2}\tilde{K}_d\\\frac{c}{2}\tilde{K}_d & c \tilde{K}_p\end{bmatrix}
\end{align}
is positive definite, if~$c\tilde{K}_p\succ 0$ and 
\begin{align}
\underbrace{\tilde{K}_d-c\frac{\tilde{K}_d \tilde{K}_p^{-1} \tilde{K}_d}{4}}_{\tilde{\Lambda}_S\in\R^{n\times n}}\succ 0
\end{align}
using the property of the Schur complement. The eigenvalues of~$\tilde{\Lambda}_S$ are lower bounded by
\begin{align}
\lambda_i(\tilde{\Lambda}_S)\geq \leig(\tilde{K}_d)-c\frac{\geig^2(\tilde{K}_d)}{4\leig(\tilde{K}_p)}
\end{align}
and thus, it is always possible to select a~$\tilde{K}_p$, such that the matrix~$\tilde{\Lambda}_S\succ 0$ and, consequently,~$\tilde{\Lambda}_M\succ 0$. Now, assume a scaling factor~$\gamma\in\R_{\geq 0}$. The eigenvalues of the overall sum~$\tilde{\Lambda}\in\R^{n\times n}$ of the two symmetric matrices
\begin{align}
\tilde{\Lambda}=\gamma\begin{bmatrix} \tilde{K}_d & \frac{c}{2}\tilde{K}_d\\\frac{c}{2}\tilde{K}_d & c \tilde{K}_p\end{bmatrix}+\begin{bmatrix} -c I & 0\\ 0 & 0\end{bmatrix}
\end{align}
are lower bounded by
\begin{align}
\leig(\tilde{\Lambda})\geq -c+\gamma\leig(\tilde{\Lambda}_M).
\end{align}
Since~$\tilde{\Lambda}_M\succ 0$, for any $c,\leig_d$ there exist a~$\gamma$ such that the eigenvalue~$\leig(\tilde{\Lambda})\geq\leig_d$. Finally, defining~$K_d=\gamma \tilde{K}_d$ and~$K_p=\gamma \tilde{K}_p$ concludes the proof.
\end{proof}
\begin{lem}
\label{lem:pdf2}
For all~$c\in\R_{>0}$, there exist positive definite and symmetric matrices~$K_d,K_p\in\R^{n\times n}$ with
\begin{align}
\geig(K_d)&\leq \bar{k}_d\in\R_{>0},\,\bar{k}_d>c\\
\geig(K_p)&\leq \bar{k}_p\in\R_{>0},\,\bar{k}_p>\frac{c}{4}\frac{\bar{k}_d^2}{\bar{k}_d-c},
\end{align}
such that~$\Lambda(K_d,K_p,c)\succ 0$.
\end{lem}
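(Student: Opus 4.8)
The plan is to prove the lemma by an explicit construction rather than an abstract existence argument. First I would apply the Schur complement to $\Lambda(K_d,K_p,c)$: since $c>0$ and $K_p$ is positive definite, the block $cK_p$ is positive definite, so $\Lambda(K_d,K_p,c)\succ 0$ is equivalent to positive definiteness of the Schur complement $S \coloneqq (K_d-cI)-\tfrac{c}{2}K_d(cK_p)^{-1}\tfrac{c}{2}K_d = K_d-cI-\tfrac{c}{4}K_dK_p^{-1}K_d$. This is essentially the same reduction used in the proof of \cref{lem:pdf}, except that here the goal is a positivity condition that must be compatible with prescribed \emph{upper} bounds on the largest eigenvalues of $K_d$ and $K_p$.

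Next I would collapse the matrix problem to a scalar one by taking $K_d=\bar{k}_d I$ and $K_p=\beta I$ with a scalar $\beta>0$ still to be fixed. Both matrices are symmetric and positive definite, $\geig(K_d)=\bar{k}_d$ attains the allowed maximum, and $S$ becomes $\big(\bar{k}_d-c-\tfrac{c\bar{k}_d^{2}}{4\beta}\big)I$. Because $\bar{k}_d>c$ by hypothesis, $S\succ 0$ holds if and only if $\beta>\tfrac{c}{4}\tfrac{\bar{k}_d^{2}}{\bar{k}_d-c}$.

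Finally I would close the argument using the standing assumption $\bar{k}_p>\tfrac{c}{4}\tfrac{\bar{k}_d^{2}}{\bar{k}_d-c}$: the interval $\big(\tfrac{c}{4}\tfrac{\bar{k}_d^{2}}{\bar{k}_d-c},\,\bar{k}_p\big]$ is then nonempty, so choosing any $\beta$ in it — for instance $\beta=\bar{k}_p$ — gives $S\succ 0$, hence $\Lambda(K_d,K_p,c)\succ 0$, while $\geig(K_p)=\beta\le\bar{k}_p$ and $\geig(K_d)=\bar{k}_d\le\bar{k}_d$, which is exactly what is claimed. I do not expect a genuine obstacle here; the statement is really a feasibility check. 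The points that need care are getting the Schur-complement coefficient right (it is $\tfrac{c}{4}$, not $\tfrac{c^2}{4}$, because of the $(cK_p)^{-1}$ factor cancelling one power of $c$) and noting that restricting to scalar matrices costs nothing, since the lemma only asserts existence. A longer alternative would mirror the proof of \cref{lem:pdf} by starting from arbitrary positive definite $\tilde K_d,\tilde K_p$ and rescaling, but the diagonal ansatz keeps the eigenvalue bookkeeping transparent.
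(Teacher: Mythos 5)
Your proposal is correct and follows essentially the same route as the paper: both reduce $\Lambda(K_d,K_p,c)\succ 0$ via the Schur complement to $K_d-cI-\tfrac{c}{4}K_dK_p^{-1}K_d\succ 0$ and then check feasibility under the eigenvalue caps. The only difference is that the paper closes with a general lower bound $\leig(K_d-cI)-c\geig^2(K_d)/(4\leig(K_p))$ and an existence claim, whereas you instantiate $K_d=\bar{k}_d I$, $K_p=\bar{k}_p I$ explicitly — which makes the feasibility check concrete but is mathematically the same argument.
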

\begin{proof}
The matrix~$\Lambda(K_d,K_p,c)$ is positive definite, iff~$cK_p\succ 0$ that is fulfilled by definition, and
\begin{align}
\underbrace{\tilde{K}_d-cI-c\frac{K_d K_p^{-1} K_d}{4}}_{\Lambda_S\in\R^{n\times n}}\succ 0 .
\end{align}
Analogous to the proof of~\cref{lem:pdf}, the eigenvalues of~$\Lambda_S$ are lower bounded by
\begin{align}
\lambda_i(\Lambda_S)\geq \leig(K_d-cI)-c\frac{\geig^2(K_d)}{4\leig(K_p)},
\end{align}
so that is is possible to achieve~$\leig(\Lambda_S)>0$ with matrices~$K_d,K_p$ which satisfy~$\geig(K_d)\leq \bar{k}_d$ and~$\geig(K_p)\leq \bar{k}_p$.
\end{proof}
\begin{thm}
\label{thm:main}
Given~\cref{as:rkhs,as:set,as:mapping} and the closed loop system~\cref{for:inter}. Then, there exist positive definite, symmetric matrices~$K_p, K_d$ and a maximal model error~$\bar\Delta$, so that~\cref{for:system} is rendered strictly semi-passive with 
\begin{align}
B_r&=\sqrt{\frac{(1+c)\bar\Delta}{\leig\left(\Lambda(K_d,K_p,c)\right)}}
\end{align}
on the set~$D_x$ with a given probability~$\delta\in(0,1)$.
\end{thm}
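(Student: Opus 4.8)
The plan is to exhibit an explicit storage function and verify the dissipation inequality of \cref{def:semipassiv} along the closed loop~\cref{for:cl}. First I would fold the Gaussian process term into a residual: writing $\bm e(\x,\dx_2)\coloneqq\tilde\dyn(\x,\dx_2)-\iu[gp]$, the closed loop~\cref{for:cl} reads $\dx_1=\x_2$, $\dx_2=-K_d\x_2-K_p\x_1+\bm e+\iu[ex]$, and by \cref{lemma:boundederror} (which is where \cref{as:rkhs} enters) $\Verts{\bm e}\le\bar\Delta$ on $D_x\times D_{\dot x}$ with probability at least~$\delta$. \Cref{as:mapping} is what makes this rewriting well posed on $D_x$, and \cref{as:set} is what keeps $\dx_2\in D_{\dot x}$ along trajectories so that the bound of \cref{lemma:boundederror} applies pointwise there.

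Next I would propose
\begin{align}
V(\x)=\tfrac12\x_2^\top\x_2+c\,\x_1^\top\x_2+\tfrac12\x_1^\top K_p\x_1=\tfrac12\Verts{\y[ex]}^2+\tfrac12\x_1^\top(K_p-c^2I)\x_1,\notag
\end{align}
which satisfies $V(\bm 0)=0$ and is nonnegative on $D_x$ once $K_p\succeq c^2I$ --- compatible with \cref{as:set}, whose cap $\bar k_p>c^2$ leaves room for such a $K_p$. Differentiating along the rewritten loop and using $\x_1^\top K_p\x_2=\x_2^\top K_p\x_1$ to cancel the indefinite cross terms, I would obtain, with $\y[ex]=c\x_1+\x_2$,
\begin{align}
\dot V=\y[ex]^\top\iu[ex]+\y[ex]^\top\bm e-\begin{bmatrix}\x_1\\\x_2\end{bmatrix}^\top\begin{bmatrix}cK_p&\tfrac c2K_d\\\tfrac c2K_d&K_d-cI\end{bmatrix}\begin{bmatrix}\x_1\\\x_2\end{bmatrix}.\notag
\end{align}
The last quadratic form has the same eigenvalues as $\Lambda(K_d,K_p,c)$ --- it is obtained from it by a symmetric block permutation --- so it is bounded below by $\leig(\Lambda(K_d,K_p,c))\Verts{\x}^2$. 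By \cref{lem:pdf2} one can choose symmetric positive definite $K_d,K_p$ with $\Lambda(K_d,K_p,c)\succ0$ that still obey the caps $\geig(K_d)\le\bar k_d$, $\geig(K_p)\le\bar k_p$ of \cref{as:set}, and by \cref{lem:pdf} the eigenvalue $\leig(\Lambda(K_d,K_p,c))$ can be made as large as desired.

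Then I would bound the residual by Cauchy--Schwarz, $\y[ex]^\top\bm e\le\Verts{\y[ex]}\bar\Delta\le(1+c)\Verts{\x}\bar\Delta$ (using $\Verts{c\x_1+\x_2}\le(1+c)\Verts{\x}$), which gives $\dot V\le\y[ex]^\top\iu[ex]-h(\x)$ with $h(\x)=\leig(\Lambda(K_d,K_p,c))\Verts{\x}^2-(1+c)\bar\Delta\Verts{\x}$; this is strictly positive whenever $\Verts{\x}>(1+c)\bar\Delta/\leig(\Lambda(K_d,K_p,c))$. Taking the maximal model error $\bar\Delta$ small enough, or (via \cref{lem:pdf}) $\leig(\Lambda)$ large enough, that $(1+c)\bar\Delta\le\leig(\Lambda(K_d,K_p,c))$, this threshold is no larger than $\sqrt{(1+c)\bar\Delta/\leig(\Lambda(K_d,K_p,c))}=r$, hence $h>0$ outside $B_r$. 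Both items of \cref{def:semipassiv} then hold with the state-dependent passive output $\y[ex]$, and the probability $\delta$ is inherited from \cref{lemma:boundederror}, so the closed loop is strictly semi-passive on $D_x$.

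I expect the main obstacle to be bookkeeping rather than anything deep: a single choice of $(K_d,K_p)$ must simultaneously keep $\Lambda(K_d,K_p,c)\succ0$, respect the caps $\bar k_d,\bar k_p$ required by \cref{as:set} (so that $\dx_2\in D_{\dot x}$ and \cref{lemma:boundederror} remains usable) and satisfy $K_p\succeq c^2I$ so that $V\ge0$ --- reconciling these is exactly where the interplay of \cref{lem:pdf,lem:pdf2} is needed --- while $\bar\Delta$ must be small relative to $\leig(\Lambda)$ for the stated $B_r$ to be admissible. A secondary subtlety is that $\dx_2$ appears implicitly on both sides of~\cref{for:cl}; one simply notes that \cref{lemma:boundederror} is evaluated at the realized pair $(\x,\dx_2)$ and therefore still yields $\Verts{\bm e}\le\bar\Delta$.
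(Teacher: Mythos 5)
Your proposal is correct and follows essentially the same route as the paper: the identical storage function $V(\x)=\frac{1}{2}\x_1^\top K_p\x_1+\frac{1}{2}\x_2^\top\x_2+c\,\x_2^\top\x_1$, the same quadratic form $\Lambda(K_d,K_p,c)$ in $\dot V$, the same use of \cref{lemma:boundederror} to bound the residual, and the same appeal to \cref{lem:pdf,lem:pdf2} together with \cref{as:set} to reconcile positive definiteness of $\Lambda$ with keeping $\dx_2\in D_{\dot x}$. You are in fact slightly more careful than the paper in one spot: the stated radius $r=\sqrt{(1+c)\bar\Delta/\leig(\Lambda)}$ only dominates the true threshold $(1+c)\bar\Delta/\leig(\Lambda)$ when $(1+c)\bar\Delta\leq\leig(\Lambda)$, a condition the paper leaves implicit in the choice of the maximal model error but which you state explicitly.
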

\begin{proof}
We assume the storage function 
\begin{align}
V(\x)=\frac{1}{2}\x_1^\top K_p\x_1+\frac{1}{2}\x_2^\top \x_2+c \x_2^\top \x_1,\label{for:V}
\end{align}
that is positive for~$\leig(K_p)> c^2$ for all~$\x_2,\x_1\in\R^n$ and zero for~$\x_2=\x_1=\bm{0}$. With~\cref{for:cl} the derivative of~$V$ is given by
\begin{align}
\dot V(\x)&=-\begin{bmatrix}\x_2^\top & \x_1^\top\end{bmatrix}
\Lambda(K_d,K_p,c)
\begin{bmatrix}\x_2 \\ \x_1\end{bmatrix}\notag\\
&+(\x_2+c\x_1)^\top(\tilde{\dyn}(\x,\dx_2)-\Mean(\y)+\iu[ex]).\label{for:Vdot}
\end{align}
The first term of the equation depends on the feedback gains whereas the second term depends on the model error. Following~\cref{lem:pdf}, for any~$c$ there exist two matrices~$K_d$ and~$K_p$, so that the matrix~$\Lambda$ is positive definite. The error between the true dynamics and the mean of the GPR in~\cref{for:Vdot} is bounded by a constant~$\bar\Delta\in\R_{>0}$ with the probability~$\delta$ using~\cref{lemma:boundederror}. Thus, the drift of the Lyapunov function is bounded by a function~$h\colon D_x\to\R$ with
\begin{align}
\dot V(\x_2,\x_1)&\leq \y[ex]^\top\iu[ex]- h(\x_2,\x_1)\\
h(\x_2,\x_1)&=\leig(\Lambda)\Verts{\begin{matrix}\x_2\\ \x_1\end{matrix}}^2-\bar\Delta\Verts{\x_2}-c\bar\Delta\Verts{\x_1}.
\end{align}
The function~$h$ is positive for
\begin{align}
\Verts{\begin{matrix}\x_2\\ \x_1\end{matrix}}> \sqrt{\frac{(1+c)\bar\Delta}{\leig(\Lambda)}}=r,\label{for:radius}
\end{align}
i.e. outside a ball~$B_r$ with the radius~$r\in\R_{>0}$. Finally, it must be guaranteed that a) the state~$\x$, once in~$D_x$, remains inside~$D_x$ while b)~$\dx_2\in D_{\dot x}$, so that the conditions of~\cref{lemma:boundederror} are not violated. The inequality~\cref{for:radius} shows that for any positive definite matrix~$\Lambda(K_d,K_p,c)$, it is possible to find a~$\bar\Delta$ so that~$r$ is arbitrary small. As consequence, there exists a~$\bar\Delta$, so that the ball~$B_r$ is a subset of~$D_x$ and thus, the state~$\x$ remains in~$\in D_x$.\\
To guarantee that~$\dx_2\in D_{\dot x}$, we use the closed loop dynamics~\cref{for:cl} with the maximum model error~$\bar\Delta$ to compute an upper bound for~$\Verts{\dx_2}$ which is given by
\begin{align}
\Verts{\dx_2}&\leq\Verts{\bar\Delta-K_d\x_2-K_p\x_1+\iu[ex]}\\
&\leq\Verts{\bar\Delta}+\geig(K_d)\Verts{\x_2}+\geig(K_p)\Verts{\x_1}+\Verts{\iu[ex]}.
\end{align}
With~\cref{lem:pdf2,as:set}, there exist a~$K_p,K_d$ so that~$\Lambda(K_d,K_p,c)$ is positive definite and~$\dx_2\in D_{\dot x}$ for all~$\x\in D_x$. Therefore, the system~\cref{for:system} is rendered strictly semi-passive with the probability~$\delta$ in respect to~$\iu[ex]$ and~$\y[ex]$.
\end{proof}
\begin{rem}
The radius of the Ball~$B_r$ can be set arbitrary small by either decreasing the maximum model error~$\bar\Delta$ or increasing the feedback gains~$\leig(\Lambda)$.
\end{rem}

\section*{Simulation}
For the simulation, we use a modified Duffing oscillator 
\begin{align}
\dot{x_1}&=x_2\\
\dot{x_2}&=u^{1/3}-\gamma x_2-\alpha x_1-\beta x_1^3+1
\end{align}
as sample system where not only the parameters are unknown but also the entire parametric form of the dynamics is assumed to be unknown. This nonlinear, second-order system describes the motion of a damped oscillator with a more complex potential than in simple harmonic motion. The parameters are set to~$\alpha=-0.1,\beta=-0.1$,~$\gamma=0.1$, such that the system's equilibrium point is unstable, see~\cref{fig:stream0}. The control input is chosen to be not input affine to demonstrate the efficiency of the proposed method. Now, the passivation approach of~\cref{thm:main} is applied. We set~$c=0.5$ for the passive output and~$\max(\vert u_\text{ex}\vert)=0.1$ for the passive input. Additionally, we set
\begin{align}
\bar{k}_d&\coloneqq 0.9>0.5=c\\
\bar{k}_p&\coloneqq 0.254>0.253=\max\{c\bar{k}_d^2/4(\bar{k}_d-c),c^2\},
\end{align}
so that~$D_{\dot{x}}=[-2.55,4.55]$ fulfills~\cref{as:set}. Since the drift function of the oscillator is continuous, the squared exponential covariance function for the Gaussian process regression is used to learn~$\tilde{f}$. For this purpose, we generate 720 pairs of inputs~$\{\dot{x}_2,x_1,x_2\}$ and outputs~$\{u-\dot{x}_2\}$ as training data on~$\x\in [-2,2]^2=D_x$ and~$\dot{x}_2\in D_{\dot{x}}$. The hyperparameters of the squared exponential covariance function are optimized by a descent gradient algorithm.\\
The feedback gains are set to~$K_d=0.9$ and~$K_p=1$. In combination with the maximum model error~$\bar\Delta=0.045$ on the set~$D_x\times D_{\dot{x}}$, the state's derivative~$\dot{x}_2$ of the passive  system is element of~$D_{\dot{x}}$, see~\cref{fig:surf}. In addition, the ball~$B_r$ is a subset of~$D_x$ which is visualize in~\cref{fig:stream1} together with the phase plane of the Duffing oscillator that is rendered strictly semi-passive. The result is that inside the set~$D_x\backslash B_r$, the closed loop system is behaves passive.
\begin{figure}[t]
\vspace{0.1cm}
	\begin{center}	
	 	\begin{tikzpicture}
\begin{axis}[
	height=5.5cm,
	width=\columnwidth,
  xmin=-4,xmax=4,
  ymin=-4,ymax=4,
  xlabel={$x_1$},
  ylabel={$x_2$},
]
  \foreach \num in {1,2,...,38} {
    \addplot[blue] file {./data/vertices0/vertice_\num.dat};                          
  }
   \foreach \numx in {0,2,...,96} {
   \pgfmathsetmacro{\numy}{int(\numx+1)}
    \addplot[blue] table [x index=\numx,y index=\numy]{./data/vertices0/arrows.dat};                          
  }
\end{axis}
\end{tikzpicture}
		\vspace{-0.3cm}\caption{Phase plane portrait of the uncontrolled Duffing oscillator.\label{fig:stream0}}
	\end{center}
	\vspace{-0.3cm}
\end{figure}
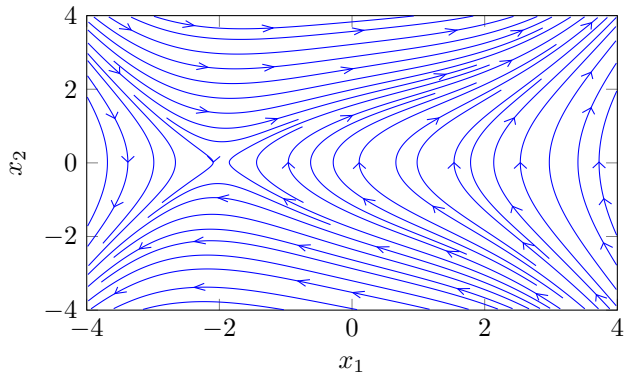
\begin{figure}[t]
	\begin{center}	
	 	\begin{tikzpicture}
    \begin{axis}[width=0.8\columnwidth,
    				height=4cm,
    				view={-150}{20}, 
    				colormap/cool, 
    				xlabel={$x_1$},
    				ylabel={$x_2$},
    				zlabel={$\dot{x}_2$},
    				xlabel shift=-5pt,
    				ylabel shift=-5pt,
    				zlabel shift=-5pt,
    				scale only axis]
\addplot3[surf,mesh/cols=30] table {data/surf.dat};
\addplot3[red,line width=1.5] coordinates {
        (-2,  2,  -0.022)
        (-2,  2,  4.55)
        (2,  2,  4.55)
        (2,  2,  -1.933)
	};
\addplot3[red,line width=1.5] coordinates {
        (-2,  2,  -0.022)
        (-2,  2,  4.55)
        (-2,  -2,  4.55)
        (-2,  -2,  3.381)
	};
\addplot3[red,line width=1.5] coordinates {
        (2,  -2,  0.096)
        (2,  -2,  4.55)
        (-2,  -2,  4.55)
        (-2,  -2,  3.381)
	};
\addplot3[red,line width=1.5] coordinates {
        (2,  -2,  0.096)
        (2,  -2,  4.55)
        (2,  2,  4.55)
        (2,  2,  -1.933)
	};
\node[text=red,align=center] at (2,-4,5) {$D_x\times D_{\dot{x}}$};
\end{axis}
\end{tikzpicture} 
		\vspace{-0.3cm}\caption{The figure demonstrates that with the selected~$K_p$,~$K_d$ and~$\bar\Delta$, the state's derivative~$\dot{x}_2$ is element of~$D_{\dot{x}}$ on~$D_x$\label{fig:surf}}
	\end{center}
	\vspace{-0.4cm}
\end{figure}
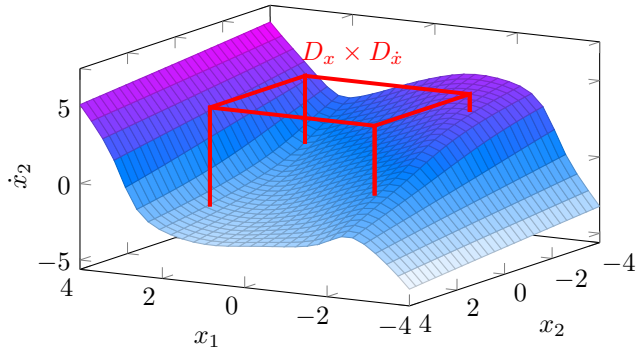
\begin{figure}[t]
\vspace{0.1cm}
	\begin{center}	
	 	\begin{tikzpicture}
\begin{axis}[
height=5.5cm,
width=\columnwidth,
  xmin=-4,xmax=4,
  ymin=-4,ymax=4,
  xlabel={$x_1$},
  ylabel={$x_2$},
]
  \foreach \num in {1,2,...,45} {
    \addplot[blue] file {./data/vertices1/vertice_\num.dat};                          
  }
     \foreach \numx in {0,2,...,96} {
   \pgfmathsetmacro{\numy}{int(\numx+1)}
    \addplot[blue] table [x index=\numx,y index=\numy]{./data/vertices1/arrows.dat};                          
  }
\node[fill=white,text=red,align=left] at (0.8,0) {$B_r$};
\node[fill=white,text=red,align=left] at (1.55,1.6) {$D_x$};
\draw [red,line width=1.5] (-2,-2) rectangle (2,2); 
\draw [red,line width=1.5] (0,0) circle (0.55); 
\end{axis}
\end{tikzpicture}
		\vspace{-0.3cm}\caption{The closed loop system with the Duffing oscillator is strictly semi-passive in~$D_x$ with the ball~$B_r$. \label{fig:stream1}}
	\end{center}
	\vspace{-0.8cm}
\end{figure}
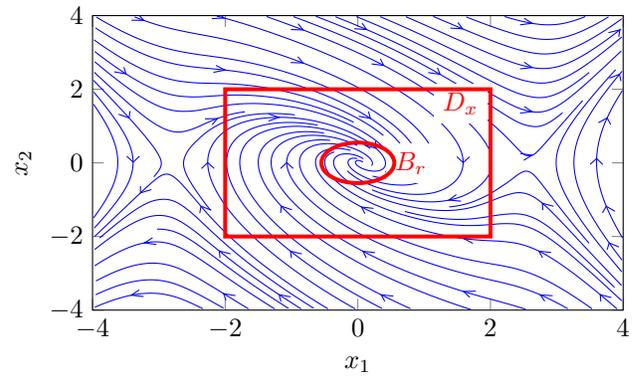
\label{sec:sim}

\section*{Conclusion}
In this paper, we present a data-driven method to render a class of nonlinear systems with unknown dynamics strictly semi-passive. As consequence, the closed-loop system behaves passive outside a ball~$B_r$ on a set~$D_x$. For this purpose, we use Gaussian process regression for the feed-forward compensation of the unknown dynamics and a feedback control law to render the closed loop system semi-passive. It is shown, that the radius of the ball~$B_r$ can be set arbitrary small depending on the model error and the feedback gains. Finally, a simulation demonstrates the presented theory.

\section*{ACKNOWLEDGMENTS}
The research leading to these results has received funding from the European Research Council under the European Union Seventh Framework Program (FP7/2007-2013) / ERC Starting Grant ``Control based on Human Models (con-humo)'' agreement n\textsuperscript{o}337654.


\bibliography{mybib}
\bibliographystyle{ieeetr}

\end{document}